\newtheorem{proposition}{Proposition}
\numberwithin{equation}{section}
\begin{document}

\begin{center}
\Large{A Two-stage Online Monitoring Procedure for High-Dimensional Data Streams}
\end{center}

\begin{center}
{\large Jun Li}\\
     Department of Statistics, University of California - Riverside
\end{center}

\begin{abstract}

\noindent
Advanced computing and data acquisition technologies have made possible the collection of high-dimensional data streams in many fields. Efficient online monitoring tools which can correctly identify any abnormal data stream for such data are highly sought after. However, most of the existing monitoring procedures directly apply the false discover rate (FDR) controlling procedure to the data at each time point, and the FDR at each time point (the point-wise FDR) is either specified by users or determined by the in-control (IC) average run length (ARL). If the point-wise FDR is specified by users, the resulting procedure lacks control of the global FDR and keeps users in the dark in terms of the IC-ARL. If the point-wise FDR is determined by the IC-ARL, the resulting procedure does not give users the flexibility to choose the number of false alarms (Type-I errors) they can tolerate when identifying abnormal data streams, which often makes the procedure too conservative. To address those limitations, we propose a two-stage monitoring procedure that can control both the IC-ARL and Type-I errors at the levels specified by users. As a result, the proposed procedure allows users to choose not only how often they expect any false alarms when all data streams are IC, but also how many false alarms they can tolerate when identifying abnormal data streams. With this extra flexibility, our proposed two-stage monitoring procedure is shown in the simulation study and real data analysis to  outperform the exiting methods.

\end{abstract}

{\bf Key words:}  CUSUM; false discovery rate; multiple hypothesis testing; per-comparison error rate; statistical process control.

\section{Introduction}\label{sec:intro}

Recent advances in data acquisition technologies have greatly facilitated the collection of massive data in many industries. The number of data streams needed to be monitored in many statistical process control (SPC) applications has grown significantly. For example, in a network traffic surveillance application, L\'{e}vy-Leduc and Roueff (2009) considered the problem of monitoring thousands of Internet traffic metrics provided by a French Internet service provider. In a health-care monitoring application, Spiegelhalter et al. (2012) discussed how to simultaneously monitor 200,000 indicators of excess mortality in the UK health system. In the era of big data, the dimension of data streams available for monitoring will continue to grow. The demand for efficient monitoring schemes that can work for high-dimensional data streams has never been greater.

Different from monitoring a single data stream, a desired monitoring scheme for high dimensional data streams not only needs to raise an alarm as early as possible when some data streams are out-of-control (OC), but also needs to correctly identify those OC data streams. The research problem of monitoring high-dimensional data streams for this purpose has attracted considerable attention in the literature recently. A number of monitoring schemes have been proposed, including those in Grigg and Spiegelhalter (2008), Li and Tsung (2009, 2012), and Gandy and Lau (2013). A common theme in those methods is to monitor each data stream by one control chart, and calculate the $p$-values of the charting statistics, and then apply some existing false discover rate (FDR) controlling procedure from the multiple hypothesis testing literature to those $p$-values at each time point. Since the FDR controlling procedure is used separately at each time point in those methods, the FDR is only controlled locally at each time point. We refer to the FDR at each time point as the point-wise FDR hereafter. However, when designing a monitoring scheme, it is more desirable to have control of certain global measures, for example, in this case, control of a global FDR. Because it has long been believed to be true  that if the point-wise FDR is controlled locally at each time point, the FDR over any window of time (the global FDR) will be controlled at the same level (see, for example, Grigg and Spiegelhalter (2008)), the approach based on controlling the point-wise FDR has dominated the literature for developing monitoring schemes for high-dimensional data streams. In Section 2, we show that, contrary to the common perceptions,  control of the point-wise FDR does not guarantee control of the global FDR.

In addition to the lack of global FDR control, the above point-wise FDR monitoring schemes do not directly tell users what is the expected in-control (IC) average run length (ARL). The IC-ARL is the expected number of observations collected before any false alarm occurs in any data stream when the system is IC. It is widely used to measure how often to expect a false alarm when the system is IC in the literature. In many applications, the system is usually expected to be IC most of the time, and users do not want a monitoring scheme to raise any false alarm too often. Therefore, the IC-ARL is also a reasonable global measure to consider when designing monitoring schemes for high-dimensional data streams.  Different from the other methods mentioned above, which simply apply the FDR controlling procedure at each time point using the point-wise FDR pre-specified by users, Li and Tsung (2009, 2012) designed their monitoring scheme to have the desired IC-ARL. More specifically, they determined the point-wise FDR used in the FDR controlling procedure at each time point through Monte-Carlo simulation so that the resulting monitoring scheme has the desired IC-ARL.  Although their procedure has control of a global measure, the IC-ARL, the point-wise FDR used in their FDR controlling procedure is completely determined by the specified IC-ARL, and a larger IC-ARL implies a smaller point-wise FDR. However, the point-wise FDR reflects how much users can tolerate false alarms (Type-I errors) when identifying OC data streams. The smaller the point-wise FDR, the less powerful the monitoring scheme for detecting OC data streams.  Therefore, the monitoring scheme in Li and Tsung (2009, 2012) can not have both the IC-ARL and Type-I errors controlled at the user's desired level. One has to be compromised.

From the above discussion, we can see that an attractive monitoring scheme for high-dimensional data streams should be able to achieve global control of certain false alarm rate and at the same time allow users to choose the level of Type-I errors they can tolerate when identifying OC data streams. As we show later, the existing monitoring procedures mentioned above were developed using a single-stage strategy. The nature of this single-stage strategy makes it difficult to modify those procedures in order for them to have the above two desired properties. In this paper, different from the single-stage strategy used in those existing monitoring schemes, we propose a general two-stage strategy to design high-dimensional data monitoring schemes. As shown in Section 3, the high-dimensional data monitoring problem can be formulated in a way that we essentially need to answer the following two questions at each time point when monitoring high-dimensional data: the first question is if there are any OC data streams, and if the answer to this question is ``yes'',  then the second question is where the OC data streams are. Based on this formulation, we propose the following two-stage procedure. In the first stage, at each time point a global test is conducted based on information gathered across all the data streams to decide if there are any OC data streams. This answers the ``IF'' question. The decision rule for this global test can be naturally used to satisfy the global IC-ARL requirement. If it is determined that there exists at least one OC data stream from the first stage, we move to the second stage and carry out some local tests to identify which data streams are OC. This answers the ``WHERE'' question. The decision rule for those local tests will be determined to control certain Type-I error rates. As a result, our proposed two-stage procedure allows users to choose not only how often they expect a false alarm when the system is IC (the IC-ARL requirement) but also how many false alarms they can tolerate  when identifying OC data streams (the Type-I error rate requirement). Due to this extra flexibility, our two-stage procedure can significantly outperform the method proposed in Li and Tsung (2009, 2012) as shown in our simulation studies.

The rest of the paper is organized as follows. In Section 2, we provide some useful insight of the existing point-wise FDR monitoring schemes and explain why those monitoring schemes lack control of the global FDR. In Section 3, we introduce our general two-stage procedure for monitoring high-dimensional data streams. In Section 4,  we demonstrate the use of this general approach in developing a new monitoring scheme for detecting Gaussian data streams with location shifts. A simulation study is reported in Section 5 to compare the performance of this new monitoring scheme with some existing method. In Section 6, we demonstrate the application of our proposed two-stage procedure using a real data set from a manufacturing process. Finally, we provide some concluding remarks in Section 7.

\section{Results on existing high-dimensional data monitoring procedures}
The setup for our high-dimension data monitoring problem is the following. There are $m$ data streams in the system, where $m$ can be hundreds or thousands.  We denote the observation from the $i$-th data stream at time $t$ by $X_{i,t}$, $i=1,...,m$, $t=1,2,...$. Since a time series model can be used to decorrelate the temporal correlation between the observations from each data stream and a spatial model can be used to decorrelate the spatial correlation between the data streams before applying monitoring schemes,
without loss of generality, we assume that the $X_{i,t}$ are independent both within each data stream and between different data streams. When the system is in-control, the underlying distribution of $\{X_{i,1},X_{i,2}, ...\}$ is called the in-control distribution, denoted by $F_{i,0}$, $i = 1,...,m$.

Following the above setup, at any time $t$, we observe $X_{i,1}, X_{i,2}, ..., X_{i,t}$, $i=1,...,m$, and  the task of our online monitoring scheme at time $t$ is to determine if the distribution of $X_{i,1}, X_{i,2}, ..., X_{i,t}$ is the same as $F_{i,0}$ for all $i=1,2, ...,m$. When some of the distributions of $X_{i,1}, X_{i,2}, ....,X_{i,t}$ have changed from $F_{i,0}$, we not only want to trigger an alarm as early as possible, but also want to correctly identify those data streams that have changed from $F_{i,0}$. This is equivalent to conducting the following multiple hypothesis testing, for $i=1,...,m$,
\[
H_{0,i,t}: X_{i,1}, \cdots, X_{i,t} \text{ follows } F_{i,0},
\]
versus
\begin{align}
H_{1,i,t}: \exists \text{ } \tau_i \in [1,t] \text{ such that } & X_{i,1}, \cdots,  X_{i,\tau_i-1} \text{ follows } F_{i,0} \nonumber \\
\text{ and } & X_{i,\tau_i}, \cdots, X_{i,t} \text{ follows } F_{i,1},
\label{test0}
\end{align}
where $F_{i,1}\neq F_{i,0}$, $F_{i,1}$ is  the OC distribution and $\tau_i$ is the change-point for the $i$-th data stream.

Most of the existing methods use the above formulation to design their monitoring schemes, which we refer to as ``the single-stage strategy''. Denote the $p$-value based on some appropriate test statistic for the above hypothesis testing problem by $p_{i,t}$. Then most of the existing monitoring schemes are constructed by applying some existing FDR controlling procedure, such as the step-up procedures proposed by Benjamini and Hochberg (1995) and Benjamini and Yekutieli (2001), to those $p_{i,t}$ at each time point with some FDR specified by users. The scheme sets off an alarm if the FDR controlling procedure produces at least one rejection and the data stream corresponding to the rejected null hypothesis is then identified as the OC data stream. In the following, we show that control of the point-wise FDR in those methods does not guarantee control of the global FDR.

\begin{table}[!htbp]
\begin{center}
\caption{The outcome of the multiple hypothesis testing problem}
\label{tab:FDR}
\begin{tabular}{lccc}
  \hline
 & \textbf{Number} &  \textbf{Number}  & \\
 \textbf{Number of} & \textbf{not rejected} & \textbf{rejected} & \\
 \hline
 IC data stream (True null hypothesis) & $m_{0,t}-V_t$ & $V_t$ & $m_{0,t}$\\
 OC data stream (Non-true null hypothesis) & $m_{1,t}-S_t$ & $S_t$ & $m_{1,t}$ \\
 \cline{1-4}
 & $m-R_t$ & $R_t$ & $m$\\
  \hline
\end{tabular}
\end{center}
\end{table}

To begin with, we introduce a few notations. Denote the unknown numbers of IC and OC data streams at time $t$ by $m_{0,t}$ and $m_{1,t}=m-m_{0,t}$, respectively. In other words, $m_{0,t}$ and $m_{1,t}$ are the unknown numbers of true and false null hypotheses for the hypothesis testing problem in (\ref{test0}), respectively. Let $R_t$ be the number of rejected hypotheses based on some rejection rule. Among those $R_t$ rejected hypotheses, $V_t$ (unknown) of them are falsely rejected and $S_t$ (unknown) of them are correctly rejected. Those $V_t$ rejections are false alarms or Type I errors. The situation discussed above is summarized in Table~\ref{tab:FDR}. The point-wise FDR at time $t$ is then defined as the expected proportion of Type-I errors among the rejected hypotheses at time $t$, that is, $\text{LFDR}_t=E(Q_t)$, where $Q_t=V_t/R_t$ if $R_t>0$ and 0 if $R_t=0$. Alternatively, it can be written as
\[
\text{LFDR}_t=E\Big(\frac{V_t}{R_t \vee 1} \Big).
\]
Similarly, an appropriate way to define the global FDR up to time $T$ is
\[
\text{GFDR}_T=E\Big(\frac{\sum_{t=1}^T V_t}{ (\sum_{t=1}^T R_t) \vee 1}\Big).
\]

When the system is IC, that is, all the null hypotheses are true with $m_{0,t}=m$ for $t=1,...,T$, the global FDR and point-wise FDR are then equivalent to,
\[
\text{GFDR}_T=P(\sum_{t=1}^T R_t \geq 1) \quad \text{ and } \quad \text{LFDR}_t=P(R_t \geq 1).
\]
Since most of the point-wise FDR monitoring procedures can have the point-wise FDR controlled at the level of $\alpha$, then $P(R_t \geq 1)$ is controlled at the level of $\alpha$ for every $t$. But it does not guarantee that $P(\sum_{t=1}^T R_t \geq 1)$ is also controlled at the level of $\alpha$. It is easy to see that, if $T$ is large enough, we will eventually encounter a false alarm. In other words,  $P(\sum_{t=1}^T R_t \geq 1)$ will approach 1. Therefore, the global FDR can approach 1 in this case. When the system is OC, as shown in our simulation studies reported in Section 5, the global FDR can be also much higher than the point-wise FDR.

To see it theoretically, note that
\[
\text{GFDR}_T=E\Big(\frac{\sum_{t=1}^T V_t}{ (\sum_{t=1}^T R_t) \vee 1}\Big) \leq E\Big(\max_{t=1,...,T}\Big\{\frac{V_t}{R_t \vee 1}\Big\}\Big) .
\]
Since
\[
E\Big(\max_{t=1,...,T}\Big\{\frac{V_t}{R_t \vee 1}\Big\}\Big) \neq \max_{t=1,...,T}E\Big(\frac{V_t}{R_t \vee 1}\Big)=\max_{t=1,...,T}\text{LFDR}_t,
\]
this helps explain why control of the point-wise FDR does not guarantee control of the global FDR in those point-wise FDR monitoring procedures.

To develop a procedure which can have the global FDR control, we notice that
\begin{align*}
&E\Big(\max_{t=1,...,T}\Big\{\frac{V_t}{R_t \vee 1}\Big\}\Big)=\int_0^{\infty}P\Big(\max_{t=1,...,T}\Big\{\frac{V_t}{R_t \vee 1}\Big\}\geq x\Big)\, dx\\
=& \int_0^{\infty}P\Big(\bigcup_{t=1,...,T}\Big\{\frac{V_t}{R_t \vee 1}\geq x \Big\}\Big)\, dx \\
\leq & \int_0^{\infty}\sum_{t=1}^TP\Big(\frac{V_t}{R_t \vee 1}\geq x  \Big)\, dx  \quad \quad \quad \quad (\text{by the Bonferroni inequality}) \\
= & \sum_{t=1}^T E\Big(\frac{V_t}{R_t \vee 1}\Big).
\end{align*}
Therefore, we have $\text{GFDR}_T \leq \sum_{t=1}^T \text{LFDR}_t$.  This implies that, if we can control the point-wise FDR at the level of $\alpha/T$, the global FDR is controlled at the level of $\alpha$. However, in many applications, $T$ is usually large, and then the level for the point-wise FDR, $\alpha/T$, can be extremely small, which will make the resulting procedure extremely conservative for detecting OC data streams. Therefore, although the above Bonferroni-type of adjustment gives us a simple procedure which has the desired global FDR control, the procedure might not be very attractive for use in practice. How to design a monitoring scheme which has proper control of global FDR remains an open problem.

Even if we were able to design a monitoring scheme that has proper control of global FDR, similar to the monitoring scheme in Li and Tsung (2009, 2012), we would not have the flexibility of choosing the point-wise FDR at the user's desired level. This is simply the result of using the singe-stage strategy. As mentioned in the Introduction, it is quite  challenging to develop a monitoring scheme based on the single-stage strategy that can have the control of certain global false alarm rate and at the same time allow the users to choose the level of Type-I errors they can tolerate when identifying OC data streams. In the following, we propose a monitoring scheme using a two-stage strategy that can easily have the two desired properties.

\section{Proposed two-stage monitoring procedure}
As shown in the previous section,  the online high-dimensional data monitoring problem can be formulated into a hypothesis testing problem in (\ref{test0}) at each time point $t$. However, this single-stage strategy leads to many restrictions of the resulting procedures. To help alleviate those restrictions, we formulate the monitoring problem into a two-stage hypothesis testing problem. At the first stage, we conduct the following global test by pulling all the information across the $m$ data streams,
\[
H_{0,t}: X_{i,1}, \cdots, X_{i,t} \text{ follows } F_{i,0}, \text{ for } i=1,...,m,
\]
versus
\begin{align}
\label{test1}
H_{1,t}: \exists \, \text{ at least one } i \in \{1,2,...,m\} \text{ and } \tau_i \in [1,t] \text{ such that } & X_{i,1}, \cdots, X_{i,\tau_i-1} \text{ follows } F_{i,0} \nonumber\\
\text{ and } & X_{i,\tau_i}, \cdots, X_{i,t} \text{ follows } F_{i,1},
\end{align}
If $H_{0,t}$ is accepted, it indicates that there is no OC data stream and all data streams are IC at time $t$. A rejection of $H_{0,t}$ indicates that some of the data streams are OC, in which case, we move to our second stage and conduct the multiple hypothesis testing in (\ref{test0}) to identify which data streams are OC under the condition that $H_{0,t}$ is rejected in the first stage.

In the above two-stage formulation, we can have the control limit developed for the hypothesis testing problem in each of the two stages, which allows us to design our monitoring scheme to satisfy both the IC-ARL and Type-I error rate requirements. In the following, we describe our two-stage monitoring procedure in detail.

\subsection{First stage: Controlling the IC-ARL}

In the first stage, we test the hypotheses in (\ref{test1}) at each time $t$. Denote an appropriate test statistic for this hypothesis testing problem by $G_t$, which is usually a function of the observations from all data streams up to time $t$. Without loss of generality, we assume that $H_{0,t}$ is rejected if $G_t$ is too large. Then we can use $G_t$ as our global monitoring statistic in the first stage. If $G_t$ is smaller than some control limit, denoted by $h$, we declare that there is no OC data stream and all data streams are IC. If $G_t$ is larger than $h$, we signal an alarm, suggesting some of the data streams are OC. Based on this formulation, the monitoring in the first stage is similar to the standard monitoring of one data stream, and we can choose the control limit $h$ such that the IC-ARL of the above monitoring scheme is equal to some pre-specified value. \textit{Therefore, by choosing the control limit $h$ in this stage, users are able to choose how often they want a false alarm to occur when the system is IC.}

Although we can choose any test statistic for the hypotheses in (\ref{test1}) as our global monitoring statistic $G_t$ and the control limit $h$ can be chosen accordingly to satisfy the IC-ARL requirement, we need to choose $G_t$ carefully so that it can be powerful to reject $H_{0,t}$ for different OC scenarios. How to develop an efficient global monitoring statistic has been also an active research area in SPC. In Section 4, we will give some examples of such $G_t$.

\subsection{Second stage: Controlling the Type-I error rate}
\subsubsection{Decision rule}
In the first stage we monitor the global monitoring statistic $G_t$. If $G_t<h$, we declare that there is no OC data stream and continue to monitor at the next time point. If $G_t>h$, we signal an alarm, suggesting some of the data streams are OC, and move to the second stage of our monitoring procedure, which is to identify which data stream is OC. As mentioned earlier, this task is equivalent to the multiple hypothesis testing problem in (\ref{test0}) under the condition that $G_t>h$. Denote the standardized test statistic we use for the hypothesis testing problem in (\ref{test0}) by $W_{i,t}$ and assume that $H_{0,i,t}$ is rejected if $W_{i,t}$ is too large. To decide which data stream is OC, a natural decision rule is the following: if $W_{i,t}$ is larger than some control limit, say $c_h$, we reject $H_{0,i,t}$ and conclude that the $i$-th data stream is OC; otherwise, we do not reject $H_{0,i,t}$ and conclude that the $i$-th data stream is IC. Here the subscript ``h'' in the control limit $c$ is to show that it depends on the control limit $h$ we obtain from the first stage.  Based on the above decision rule, our remaining task is to determine the threshold $c_h$. Since this is a multiple hypothesis testing problem, we need to determine $c_h$ to control certain Type-I error rate. The Type-I error rate in our monitoring application simply reflects how much we can tolerate false alarms when identifying abnormal data streams. \textit{Therefore, by choosing the control limit $c_h$ in our second stage, users are able to choose how many false alarms they want to tolerate when identifying OC data streams.}

\subsubsection{Choice of Type-I error rate}
When testing a single hypothesis, the Type-I error rate is simply the probability of a Type-I error. When testing multiple hypotheses in (\ref{test0}), there is a Type-I error associated with each $H_{0,i,t}$, $i=1,...,m$. Therefore, there are many ways to define the overall Type-I error rate when testing those multiple hypotheses simultaneously. Besides the FDR, some of the standard Type-I error rates used in the multiple hypothesis testing literature are the following (see, for example, Dudoit et al. (2003)):
\begin{itemize}
\item[$\bullet$] \textit{The per-comparison error rate} (PCER) is defined as the expected number of Type-I errors divided by the number of hypotheses.
\item[$\bullet$] \textit{The family-wise error rate} (FWER) is defined as the probability of at least one Type-I error.
\end{itemize}
Using the notations introduced in Table \ref{tab:FDR},
\[
\text{PCER}_t=E(V_t)/m \quad \text{ and } \quad \text{FWER}_t=P(V_t \geq 1).
\]

The FWER control is the most conservative of the above three Type-I error rates, and is usually considered to be too conservative for the high-dimensional data monitoring purpose. The FDR control is less conservative than the FWER control, and it is a popular choice in many multiple hypothesis testing problems. However, similar to what we explain in Section 2, it is difficult to figure out a way to determine $c_h$ such that the resulting procedure can have proper control of global FDR.
If we use the PCER as the choice of Type-I error rate, we can define the point-wise PCER and global PCER as follows,
\[
\text{LPCER}_t=E(V_t)/m \quad \text{ and } \quad \text{GPCER}_T=E(\sum_{t=1}^T V_t)/(Tm).
\]
Notice that the global PCER is simply the average of the point-wise PCERs, which implies that, if all the point-wise PCERs are controlled at the level of $\alpha$, the global PCER is controlled at the level of $\alpha$. Therefore, we can focus on the point-wise PCER when developing the decision rule. This is one of the advantages of using the PCER over the FDR. Furthermore, by definition, $m\text{LPCER}_t$ is simply the expected number of Type-I errors (false alarms). As a result, it is straightforward for users to specify the desired level of $\text{LPCER}_t$ from the number of false alarms they can tolerate when identifying OC data streams.  Due to the above reasons, in the following we use  the PCER as the choice of Type-I error rate in our second stage and describe how to determine $c_h$ to control the PCER.

\subsubsection{Determining the control limit based on the PCER}

Without loss of generality, we assume that the first $m_{0,t}$ data streams are IC. Then for any given $c_h$, we have
\[
\text{PCER}_t=E(V_t)/m=\frac{1}{m}\sum_{i=1}^{m_{0,t}} P_{H_{1,t}}\Big(W_{i,t} > c_h|G_t>h \Big)
\]
where the subscript ``$H_{1,t}$'' indicates that the probabilities are calculated under the setting where the first $m_{0,t}$ data streams are IC and the remaining are OC.  Therefore, for the PCER at the pre-specified level $\alpha$, we need to determine $c_h$ such that
\begin{equation}
\label{eqn:PCER}
\frac{1}{m}\sum_{i=1}^{m_{0,t}} P_{H_{1,t}}\Big(W_{i,t} > c_h|G_t>h \Big) \leq \alpha.
\end{equation}
The above conditional probabilities are very challenging to calculate. To circumvent this difficulty, we denote the marginal $p$-value of $W_{i,t}$ by $p_{i,t}$ and the probability density function of $p_{i,t}$ when the data stream is OC by $\ell_{i,t}(\cdot)$. As shown in the Appendix, if $\ell_{1,t}(\cdot),..., \ell_{m,t}(\cdot)$ are all non-increasing functions, we have, for $i=1,...,m_{0,t}$,
\[
P_{H_{1,t}}\Big(W_{i,t} > c_h|G_t>h \Big) \leq P_{H_{0,t}}\Big(W_{i,t} > c_h|G_t>h \Big),
\]
where the subscript ``$H_{0,t}$'' indicates that the probabilities are calculated under the null hypothesis $H_{0,t}$ in (\ref{test1}); that is, $m_{0,t}=m$ and all  data streams are IC.
Therefore,
\[
   \frac{1}{m}\sum_{i=1}^{m_{0,t}} P_{H_{1,t}}\Big(W_{i,t} > c_h|G_t>h \Big) \leq \frac{1}{m}\sum_{i=1}^{m}P_{H_{0,t}}\Big(W_{i,t} > c_h|G_t>h \Big).
\]
To find $c_h$ that satisfies (\ref{eqn:PCER}), we only need to find $c_h$ such that
\begin{equation}
\label{eqn:condP}
\frac{1}{m}\sum_{i=1}^{m}P_{H_{0,t}}\Big(W_{i,t} > c_h|G_t>h \Big)=\alpha.
\end{equation}

Note that the above condition that $\ell_{1,t}(\cdot),..., \ell_{m,t}(\cdot)$ are all non-increasing functions is not restrictive and is satisfied for most of the commonly used charting statistics, including the CUSUM statistic we used in Section 4.


To make the above task for finding $c_h$ easier, we assume that both $G_t$ and $W_{i,t}$ ($i=1,...,m$) follow their respective null steady-state distributions when $H_{0,t}$ is true. Under this assumption, $P_{H_{0,t}}\Big(W_{i,t} > c_h|G_t>h \Big)$ does not depend on $t$. Then given the control limit $h$ we obtain from the first stage, the control limit $c_h$ of our second stage can be found as follows.
\begin{itemize}
\item[\textbf{Step 1:}] For any given $c_h$, generate $m$ IC data streams and keep monitoring them using the global monitoring statistic $G_t$. When $G_t$ is larger than the given control limit $h$, record the proportion of $W_{i,t}$ that is larger than the given value of $c_h$ and denote it by $\hat{p}$.
\item[\textbf{Step 2:}] Repeat Step 1 $B$ times (for example, $B=2500$). Then the average of the $\hat{p}$ over the $B$ replications can be used to approximate $\frac{1}{m}\sum_{i=1}^{m}P_{H_{0,t}}\Big(W_{i,t} > c_h|G_t>h \Big)$.
\item[\textbf{Step 3:}] Using Steps 1 and 2, we can obtain the approximate of $\frac{1}{m}\sum_{i=1}^{m}P_{H_{0,t}}\Big(W_{i,t} > c_h|G_t>h \Big)$ for any given $c_h$. We can then use some numerical method, for example, the bisection method to obtain the control limit $c_h$ of our second stage that satisfies (\ref{eqn:condP}).
\end{itemize}

In the following, we summarize our two-stage monitoring procedure.
\begin{itemize}
\item[\textbf{Stage 1}:] We first select an appropriate global monitoring statistic $G_t$. The monitoring scheme in the first stage is simply calculating $G_t$ at each time point and comparing $G_t$ with the pre-specified control limit $h$. If $G_t$ is larger than $h$, we signal an alarm indicating that there exists at least one OC data stream, and proceed to the second stage. Otherwise, we continue the monitoring using $G_t$. The control limit $h$ is determined by using Monte Carlo simulation and bisection method such that the IC-ARL of the above monitoring scheme based on $G_t$ is equal to the pre-specified IC-ARL.

\item[\textbf{Stage 2}:] If $G_t$ is larger than $h$ and an alarm is given in the first stage, we then construct the local monitoring statistic $W_{i,t}$ for each data stream. If $W_{i,t}$ is larger than the control limit $c_h$, the $i$-th data stream is OC. Otherwise, it is IC. The control limit $c_h$ in this stage is determined using the three steps described above.
\end{itemize}

From the summary of our two-stage procedure, the control limit $h$ in our first stage is determined by the IC-ARL, which controls how often we expect a false alarm when the system is actually IC. The control limit $c_h$ in our second stage is determined by the PCER, which controls how many false alarms we can tolerate. Therefore, by using our two-stage monitoring procedure, the users can design their own monitoring scheme depending on how often they want to expect any false alarm when the system is IC and how many false alarms they can tolerate when identifying OC data streams.

\section{Two-stage monitoring procedure for Gaussian data streams with location shifts}
In this section, we show how to develop a two-stage monitoring procedure for high-dimensional data streams in one particular setting following the general approach we describe in Section 3. The setting we consider here is that the IC distribution $F_{i,0}$ ($i=1,...,m$) is the Gaussian distribution with mean $\mu_{i,0}$ and variance 1 (denoted by $N(\mu_{i,0},1)$), and the target OC distribution $F_{i,1}$ ($i=1,...,m$) is also some Gaussian distribution with mean $\mu_{i,1}$  and variance $1$ (denoted by $N(\mu_{i,1},1)$). Under the Gaussian distribution assumption, an optimal monitoring statistic for each data stream is the CUSUM statistic.  For example, to detect an upward mean shift, the CUSUM statistic for the $i$-th data stream is defined as
\begin{equation}
\label{eqn:CUSUM}
\begin{cases}
C_{i,0}^+=0\\
C_{i,t}^+=\max(0,C_{i,t-1}^++X_{i,t}-\mu_{i,0}-k_i)), \text{ for } t \geq 1
\end{cases}
\end{equation}
where $k_i=(\mu_{i,1}-\mu_{i,0})/2$ is the reference value and usually specified before the monitoring based on the target OC distribution mean $\mu_{i,1}$.

To implement our two-stage monitoring procedure, we first need to find an appropriate global monitoring statistic $G_t$, which has good power to detect any type of OC scenarios. Tartakovsky et al. (2006) proposed using  $\max_{i=1,...,m}C^+_{i,t}$ as the global monitoring statistic. This approach enjoys some optimal property when exactly 1 out of $m$ data streams is OC. However, it performs poorly if many data streams are OC. To overcome this limitation, Mei (2010) proposed using $\sum_{i=1}^m C^+_{i,t}$ to monitor the $m$ data streams simultaneously. It has been shown that  $\sum_{i=1}^m C^+_{i,t}$ is more effective than $\max_{i=1,...,m}C^+_{i,t}$ when a moderate or large number of data streams are OC. However,  $\sum_{i=1}^m C^+_{i,t}$ is less powerful than $\max_{i=1,...,m}C^+_{i,t}$ when only a few data streams are OC. In practice, it is usually unknown in advance how many data streams will be OC. Therefore, neither  $\max_{i=1,...,m}C^+_{i,t}$ nor $\sum_{i=1}^m C^+_{i,t}$ as the global monitoring statistic can guarantee robust performance. Xie and Siegmund (2013) recognized those limitations and further proposed a monitoring statistic derived from the likelihood function of a normal-mixture model. Besides being extremely computationally intensive,  their approach also needs to pre-specify the mixing percentage. The approach will perform the best if the pre-specified mixing percentage correctly reflects the percentage of OC data streams. If the mixing percentage is misspecified, their approach will  sacrifice some power. Recently Zou et al. (2014) proposed some alternative statistic to combine the CUSUM statistics from each of the $m$ data streams to produce a single global monitoring statistic. This approach does not need the prior knowledge of how many data streams are OC, and performs well comparing with the aforementioned methods across different OC scenarios. Therefore, in the following we use the global monitoring statistic proposed by Zou et al. (2014) as an example to demonstrate our two-stage procedure.

Denote the marginal $p$-value of $C^+_{i,t}$ defined in (\ref{eqn:CUSUM}) by $p_{i,t}$, $i=1,...,m$. The corresponding order statistics are $p_{(1),t}\leq \cdots \leq p_{(m),t}$. The global monitoring statistic $G_t$ proposed in Zou et al. (2014) is defined as,
\begin{equation}
\label{eqn:Gt}
G_t=\sum_{i=1}^m\Big\{\log\Big[\frac{(1-p_{(i),t})^{-1}-1}{(m-1/2)/(i-3/4)-1} \Big] \Big\}^2I_{\{p_{(i),t}<1-(i-3/4)/m\}},
\end{equation}
where $I_{\{A\}}$ is the indicator function and takes 1 if $A$ is true and 0 otherwise.

To use the above $G_t$ in the monitoring scheme, it is important that those $p$-values, $p_{i,t}$, can be calculated quickly. Grigg and Spiegelhalter (2008) developed an empirical approximation to the null steady-state distribution of the CUSUM statistic defined in (\ref{eqn:CUSUM}) and further provides a close-form formula to approximate the steady-state $p$-value. Since this formula only works when the CUSUM statistic reaches its steady-state, to make use of this formula, we modify the definition of the CUSUM statistic a little. Instead of starting the CUSUM statistic at 0, i.e., $C_{i,0}^+=0$ as in (\ref{eqn:CUSUM}), we start the CUSUM statistic at some value randomly drawn from the null steady-state distribution of $C_{i,t}^+$. To draw a random sample from the null steady-state distribution of $C_{i,t}^+$, we generate $10^6$ independent sequences of $\{X_{j,1},...,X_{j,2000}\}$ ($j=1,...,10^6$), each of which is independently drawn from $N(0,1)$, and calculate $C_{j,2000}^+$ as in (\ref{eqn:CUSUM}). Then $\{C_{j,2000}^+\}_{j=1}^{10^6}$ can serve as a random sample from the null steady-state distribution of $C_{i,t}^+$. Our modified CUSUM statistic is  then  defined as follows. For $i=1,...,m$,
\begin{equation}
\label{eqn:CUSUM1}
\begin{cases}
C_{i,0}^{+*}=V_i\\
C_{i,t}^{+*}=\max(0,C_{i,t-1}^{+*}+X_{i,t}-\mu_{i,0}-k_i)), \text{ for } t \geq 1
\end{cases}
\end{equation}
where $V_i$ is randomly drawn with replacement from $\{C_{j,2000}^+\}_{j=1}^{10^6}$. Since the above modified CUSUM statistic starts from the steady-state, the null distribution of the $C_{i,t}^{+*}$ at any time point $t$ follows the null steady-state distribution. As a result, we can utilize the close-form formula provided in Grigg and Spiegelhalter (2008) to calculate the $p$-values of the $C_{i,t}^{+*}$ quickly. For simplicity, we abuse the notation and also denote those $p$-values by $p_{i,t}$. Based on the above modified CUSUM statistics and their corresponding $p$-values $p_{i,t}$, we calculate $G_t$ in (\ref{eqn:Gt}). Our monitoring scheme in the first stage is then to monitor those $G_t$ from our modified CUSUM statistics and signals an alarm if $G_t>h$, where $h$ is the control limit and can be determined through Monte-Carlo simulation to satisfy the IC-ARL requirement.

If we signal an alarm, i.e., $G_t>h$, in our first stage, we move to our second stage to identify which data stream is OC. In this stage, we first need to identify the appropriate test statistic $W_{i,t}$ for the hypothesis testing problem (\ref{test0}) for each data stream. Due to the optimality of the CUSUM statistic, we can use the above modified $C^{+*}_{i,t}$  as our $W_{i,t}$. Alternatively we can also use $1-p_{i,t}$, where $p_{i,t}$ is again the $p$-value of $C^{+*}_{i,t}$, as our $W_{i,t}$. Since $1-p_{i,t}$ is simply a non-decreasing function of $C^{+*}_{i,t}$, these two test statistics are equivalent. Given the fact that the $p_{i,t}$ have been calculated in our first stage, we use  $1-p_{i,t}$ as our $W_{i,t}$ in this example. Based on $1-p_{i,t}$ as our $W_{i,t}$, we next use Monte-Carlo simulation and the bisection method as described in Section 3 to determine the control limit $c_h$ of our second stage to satisfy the PCER requirement, that is, finding $c_h$ such that
\[
\frac{1}{m}\sum_{i=1}^{m}P_{H_{0,t}}\Big(1-p_{i,t} > c_h|G_t>h\Big)=\alpha,
\]
where $\alpha$ is the desired PCER. Recall that, in order to easily use Monte Carlo simulation and the bisection method to determine the above control limit $c_h$, we need to assume that both $G_t$ and $W_{i,t}$ ($i=1,...,m$) follow their respective null steady-state distributions when $H_{0,t}$ is true. Therefore, our modification of the CUSUM statistics described earlier is also necessary for this assumption to hold. Once we obtain the control limit $c_h$ using Monte Carlo simulation and the bisection method, we identify the $i$-th data stream as the OC data stream if $1-p_{i,t} > c_h$, and as the IC data stream otherwise.

\section{Simulation Study}
In this section, we refer to our two-stage procedure described in Section 4 as the two-stage CUSUM procedure. In the following, we present a simulation study to compare the performance of this two-stage CUSUM procedure with the one using the methodology proposed in Li and Tsung (2009, 2012).

Using the methodology proposed in Li and Tsung (2009, 2012) in the setting considered in Section 4, at each time point $t$, we will monitor each data stream using the CUSUM statistic and calculate its $p$-value. To use the close-form formula provided in Grigg and Spiegelhalter (2008) to calculate those $p$-values quickly, we also use the modified CUSUM statistic $C_{i,t}^{+*}$ defined in (\ref{eqn:CUSUM1}) instead of the original CUSUM statistic $C_{i,t}^{+}$ in (\ref{eqn:CUSUM}) to monitor each data stream. The $p$-values of the  $C_{i,t}^{+*}$ are again denoted by $p_{i,t}$, and their corresponding order statistics are $p_{(1),t}\leq \cdots \leq p_{(m),t}$. Let
\[
I=\max\{k: p_{(k),t} \leq kq/m,\quad 1\leq k \leq m\},
\]
where $q$ is the point-wise FDR and is determined by Monte Carlo simulation to satisfy the desired IC-ARL. If $I=0$, all the data streams are IC. If $I>0$, all the data streams associated with $p_{(1),t},...,p_{(I),t}$ are classified as the OC data streams and the rest are the IC data streams. Hereafter, we refer to this procedure as the LT procedure.

Since both our two-stage CUSUM procedure and the LT procedure can be designed to have the desired IC-ARL control, we compare these two procedures based on how quickly they can correctly identify all the OC data streams. If there is only one  data stream, this can be easily measured by the OC-ARL, which is the average number of the observations collected after the system becomes OC at the time when the scheme triggers an alarm. However, when there are a large number of data streams, even if there is only one OC data stream, it is not guaranteed that the scheme can identify the right OC data stream at the first time when it triggers an alarm. When there are multiple OC data streams, it is also quite often that the scheme will not be able to identify all the OC data streams at once. Therefore, using the OC-ARL might not be able to reflect well the performance when monitoring high-dimensional data streams.

To find an appropriate criterion to evaluate the performance, different from the simulations conducted to obtain the OC-ARL where the monitoring is stopped at the first time when the scheme triggers an alarm, in our simulations we assume that after the OC data stream is identified by the scheme, the monitoring of that data stream continues, the observations collected afterwards from that data stream will be drawn from its IC distribution, and its CUSUM statistic will be restarted at the value randomly drawn with replacement from $\{C_{j,2000}^+\}_{j=1}^{10^6}$ as described in the previous section. When all the OC data streams are identified, the monitoring is stopped and we calculate the average time to detect all the OC data streams, which we denote by the ATDOC. Comparing with the OC-ARL, the ATDOC can better measure how effective the monitoring scheme is in terms of detecting all the OC data streams. Therefore, in the following we use the ATDOC criterion to compare the performance of the two monitoring schemes.

The simulation settings we consider here are similar to those used in Zou et al. (2014). The general settings are the following. Among the $m$ data streams, $m_0$ data streams are from the IC distribution $N(0,1)$, the remaining $m_1=m-m_0$ data streams are from some OC distribution $N(\delta_i,1)$, $i=1,...,m_1$. Two scenarios of the OC distribution are considered: (I) equal allocation, i.e., $\delta_i=\mu$; (II) increasing allocation, i.e., $\delta_i=\mu\log(1+\sqrt{i})$, $i=1,...,m_1$. In both scenarios,  $\mu$ is the target OC mean value specified in the CUSUM statistic, and throughout the simulations, we set $\mu=0.5$.

In the first simulation study, we choose $m=100$. The desired IC-ARL is set at 200, 500, 1000 or 10000. Besides the IC-ARL, we also need to specify the desired PCER for our two-stage CUSUM procedure. Three scenarios are considered: PCER=0.01, 0.03 or 0.05. The control limits $h$ in the first stage and $c_h$ in the second stage for our two-stage CUSUM procedure along with the point-wise FDR $q$ for the LT procedure can be obtained through Monte-Carlo simulation. The results are presented in Table \ref{tab:CL100}.
\begin{table}[!hbtp]
\begin{center}
\caption{The control limits used in our proposed two-stage CUSUM and LT procedures when $m=100$, $\mu=0.5$.}\label{tab:CL100}
\begin{tabular}{|c|c|c|c|c||c|c|c|c|c|}
  \hline
IC-ARL & $q$ & $h$ & PCER & $c_h$ & IC-ARL & $q$ & $h$ & PCER & $c_h$ \\
  \hline
200 & .04865 & 16.546 & .01 & .99577 & 500 & .02087  & 22.900 & .01 &  .99716\\
& & & .03  & .98428 & & & & .03 & .98710\\
& & & .05 & .97039 &  & & & .05 & .97472\\
\hline
1000 & .01034  & 28.570 & .01 & .99802 & 10000 & .00108  & 49.119 & .01 & .99947\\
& & & .03  & .98863 & & & & .03 & .99225\\
& & & .05 & .97746 &  & & & .05 & .98097\\
  \hline
\end{tabular}
\end{center}
\end{table}

Using those control limits, we apply our two-stage CUSUM procedure and the LT procedure to different OC settings. Tables \ref{tab:ATDOC1} and \ref{tab:ATDOC2} reports the the average and standard deviation of the ATDOC of the two procedures from 1000 simulations for those settings. As we can see, in all the settings considered here, for our two-stage CUSUM procedure, higher PCERs lead to smaller ATDOCs. So if we want to detect the OC data streams faster, we can increase the PCER that we can tolerate. In contrast, the LT procedure does not have this kind of flexibility. Comparing between these two procedures, our two-stage CUSUM procedure has smaller ATDOC than the LT procedure for all the settings except for the case of $m_1=1$, and the advantage of our two-stage CUSUM procedure over the LT procedure becomes more pronounced when the number of the OC data streams increases. When PCER=0.05, our two-stage CUSUM procedure can cut the ATDOC of the LT procedure by half in many cases. In the case of $m_1=1$, since there is only one OC data stream, the ATDOC is roughly equal to the OC-ARL. Based on our simulation, the LT procedure has very similar OC-ARL performance as the one using the global monitoring statistic $\max_{i=1,...,m}C^+_{i,t}$. As mentioned earlier, the monitoring scheme using $\max_{i=1,...,m}C^+_{i,t}$ is optimal when there is exactly one OC data stream. This explains why the LT procedure performs better than the two-stage CUSUM procedure when $m_1=1$.

\begin{table}[!htpb]
\begin{center}
\caption{The ATDOC comparison between our proposed two-stage CUSUM procedure and the LT procedure when $m=100$. Standard deviations of the ATDOC are in parentheses.}\label{tab:ATDOC1}
\begin{tabular}{|c|c|c|c|ccc|}
  \hline
& & &  & \multicolumn{3}{|c|}{ Two-stage CUSUM} \\
& IC-ARL & $m_1$ & LT & PCER=.01 & PCER=.03 & PCER=.05 \\
\hline
& & 1 & 50.4 (25.7) & 57.2 (27.6) & 55.4 (27.3) & 54.0 (27.6) \\
& & 3 & 49.7 (15.1) & 49.5 (16.8) & 44.4 (16.0) & 41.9 (15.5) \\
& & 5 & 49.4 (11.8) & 47.0 (12.4) & 41.4 (11.9) & 38.8 (11.8) \\
& & 8 & 48.0 (9.2) & 43.5 (9.2) & 37.5 (8.6) & 34.7 (8.4) \\
& 200 & 10 & 48.0 (8.3) & 42.2 (8.2) & 36.3 (7.6) & 33.2 (7.3) \\
& & 20 & 45.2 (5.9) & 38.2 (5.3) & 31.5 (5.00) & 28.6 (4.8) \\
& & 50 & 41.1 (3.7) & 35.9 (3.1) & 27.1 (2.7) & 23.5 (2.6) \\
& & 80 & 38.1 (2.9) & 35.2 (2.4) & 25.8 (2.0) & 21.7 (1.9) \\
Equal & & 100  & 36.6 (2.6) & 35.1 (2.1) & 25.5 (1.8) & 21.2 (1.7) \\
\cline{2-7}
allocation & &  1 & 58.4 (27.7) & 67.8 (30.7) & 66.5 (31.2) & 65.5 (31.5) \\
& & 3 & 58.0 (17.4) & 56.0 (18.6) & 50.5 (18.2) & 47.2 (17.9) \\
& & 5 & 56.7 (12.7) & 51.5 (13.1) & 45.6 (12.6) & 42.4 (12.1) \\
& & 8 & 55.6 (10.7) & 47.8 (10.6) & 41.0 (10.0) & 37.8 (9.4) \\
& 500 & 10 & 55.1 (9.1) & 46.1 (8.7) & 39.2 (8.0) & 35.9 (7.7) \\
& &  20 & 52.5 (6.7) & 41.6 (5.8) & 34.0 (5.4) & 31.0 (5.3) \\
& & 50 & 48.6 (4.2) & 39.3 (3.3) & 29.1 (2.9) & 25.2 (2.8) \\
& & 80 & 45.7 (3.4) & 38.5 (2.6) & 27.5 (2.1) & 23.1 (2.0) \\
& & 100 & 44.3 (2.8) & 38.3 (2.3) & 27.2 (1.9) & 22.6 (1.7) \\
\hline
 \hline
& & 1 & 101.4 (69.1) & 116.1 (76.7) & 106.9 (70.2) & 103.5 (68.8) \\
& &  3 & 72.1 (28.1) & 73.2 (30.0) & 64.8 (28.7) & 59.6 (27.8) \\
& &  5 & 59.4 (19.1) & 57.8 (20.1) & 50.4 (18.9) & 46.3 (17.8) \\
& &  8 & 48.5 (11.7) & 44.6 (11.7) & 38.6 (11.1) & 35.7 (11.0) \\
& 200 & 10 & 44.2 (9.9) & 40.1 (10.3) & 34.2 (9.5) & 31.0 (8.8) \\
& & 20 & 31.6 (5.0) & 28.1 (5.2) & 23.1 (4.6) & 21.1 (4.6) \\
& & 50 & 20.6 (2.4) & 19.2 (2.2) & 14.8 (1.9) & 12.8 (1.9) \\
& & 80 & 16.1 (1.4) & 16.2 (1.4) & 12.3 (1.3) & 10.4 (1.2) \\
Increasing & & 100 & 14.4 (1.2) & 15.1 (1.1) & 11.4 (1.0) & 9.6 (1.0) \\
 \cline{2-7}
allocation & & 1 & 121.1 (81.6) & 138.0 (91.1) & 131.8 (86.7) & 129.0 (85.2) \\
& & 3 & 83.3 (29.8) & 83.0 (33.1) & 73.1 (30.6) & 68.3 (29.5) \\
& & 5 & 68.7 (20.7) & 63.9 (22.2) & 56.0 (21.2) & 51.2 (19.5) \\
& & 8 & 55.8 (13.3) & 49.6 (14.4) & 42.0 (13.3) & 37.9 (11.8) \\
& 500 & 10 & 50.8 (10.8) & 44.1 (11.3) & 37.4 (10.5) & 34.0 (10.1) \\
& &  20 & 36.8 (5.4) & 30.9 (5.9) & 25.2 (5.5) & 22.8 (5.2) \\
& & 50 & 24.2 (2.5) & 21.0 (2.4) & 15.9 (2.2) & 13.8 (2.2) \\
& & 80 & 19.3 (1.6) & 17.6 (1.5) & 13.1 (1.4) & 11.1 (1.3) \\
& & 100 & 17.3 (1.3) & 16.4 (1.3) & 12.1 (1.2) & 10.2 (1.1) \\
\hline
\end{tabular}
\end{center}
\end{table}

\begin{table}[!htpb]
\begin{center}
\caption{The ATDOC comparison between our proposed two-stage CUSUM procedure and the LT procedure when $m=100$. Standard deviations of the ATDOC are in parentheses.}\label{tab:ATDOC2}
\begin{tabular}{|c|c|c|c|ccc|}
  \hline
& & &  & \multicolumn{3}{|c|}{ Two-stage CUSUM} \\
& IC-ARL & $m_1$ & LT & PCER=.01 & PCER=.03 & PCER=.05 \\
\hline
& & 1 & 64.1 (30.0) & 73.5 (32.2) & 72.8 (32.2) & 72.2 (32.0) \\
& &  3 & 63.4 (17.2) & 60.3 (18.6) & 52.7 (17.9) & 50.1 (17.4) \\
& &  5 & 63.0 (13.4) & 55.7 (14.1) & 48.1 (13.2) & 45.4 (13.2) \\
& &  8 & 61.9 (11.3) & 51.7 (11.2) & 43.8 (10.5) & 40.5 (10.1) \\
& 1000&  10 & 61.0 (9.5) & 49.6 (9.2) & 41.7 (8.6) & 38.3 (8.1) \\
& &  20 & 59.1 (6.7) & 45.2 (5.9) & 36.4 (5.5) & 33.1 (5.2) \\
& &  50 & 54.5 (4.2) & 42.0 (3.3) & 30.2 (2.8) & 26.4 (2.8) \\
& &  80 & 52.0 (3.5) & 41.6 (2.8) & 28.8 (2.3) & 24.3 (2.2) \\
Equal & &  100 & 50.4 (3.0) & 41.2 (2.4) & 28.3 (1.9) & 23.6 (1.8) \\
  \cline{2-7}
  allocation & &  1 & 83.0 (36.9) & 89.9 (39.0) & 89.5 (39.2) & 89.4 (39.1) \\
& & 3 & 80.8 (19.6) & 72.1 (20.0) & 61.7 (20.5) & 58.1 (19.4) \\
& & 5 & 80.7 (15.8) & 66.9 (15.8) & 55.5 (14.5) & 51.4 (14.4) \\
& & 8 & 79.9 (12.1) & 62.1 (11.2) & 50.0 (10.5) & 46.0 (10.1) \\
& 10000 & 10 & 79.0 (10.7) & 59.2 (9.8) & 47.1 (9.1) & 42.7 (8.8) \\
& & 20 & 77.6 (8.1) & 55.0 (6.8) & 41.1 (6.3) & 36.8 (5.9) \\
& & 50 & 73.3 (4.9) & 52.4 (4.0) & 34.1 (3.3) & 29.3 (3.1) \\
& & 80 & 71.2 (4.0) & 52.0 (3.0) & 32.4 (2.4) & 26.5 (2.2) \\
& & 100 & 69.9 (3.4) & 51.8 (2.7) & 31.8 (2.1) & 25.7 (1.9) \\
\hline
 \hline
& & 1 & 138.9 (95.8) & 155.7 (102.7) & 152.2 (100.3) & 150.2 (100.0) \\
& & 3 & 93.4 (34.9) & 91.3 (38.9) & 81.3 (38.3) & 75.8 (37.1) \\
& & 5 & 76.5 (22.1) & 70.4 (23.7) & 59.9 (22.1) & 55.6 (21.2) \\
& & 8 & 61.7 (14.1) & 53.6 (14.8) & 45.0 (14.0) & 41.4 (13.6) \\
& 1000 & 10 & 55.8 (11.4) & 47.3 (12.3) & 39.5 (11.3) & 36.1 (10.8) \\
& & 20 & 41.3 (6.3) & 33.5 (6.5) & 26.9 (5.9) & 24.2 (5.5) \\
& & 50 & 27.2 (2.6) & 22.5 (2.5) & 16.7 (2.4) & 14.4 (2.3) \\
& & 80 & 22.0 (1.8) & 19.0 (1.7) & 13.7 (1.6) & 11.7 (1.5) \\
Increasing & & 100 & 19.8 (1.4) & 17.6 (1.3) & 12.6 (1.2) & 10.7 (1.2) \\
 \cline{2-7}
allocation & &1 & 187.3 (120.6) & 199.3 (123.1) & 196.6 (121.2) & 196.4 (121.3) \\
& &  3 & 121.6 (40.8) & 112.2 (45.0) & 94.8 (43.9) & 87.3 (40.6) \\
& &  5 & 98.8 (26.4) & 85.4 (27.7) & 71.2 (26.2) & 65.5 (25.1) \\
& &  8 & 81.6 (17.1) & 67.1 (17.7) & 53.6 (16.1) & 48.4 (15.8) \\
& 10000 &  10 & 74.0 (14.4) & 59.0 (15.1) & 47.2 (15.0) & 42.2 (14.3) \\
& &  20 & 54.3 (7.6) & 41.2 (7.5) & 30.8 (7.2) & 27.5 (6.8) \\
& &  50 & 36.8 (3.0) & 28.2 (3.0) & 19.1 (2.8) & 16.2 (2.7) \\
& &  80 & 30.1 (1.9) & 23.5 (1.9) & 15.5 (1.8) & 12.8 (1.6) \\
& &  100 & 27.4 (1.6) & 21.8 (1.5) & 14.1 (1.4) & 11.6 (1.3) \\

\hline
\end{tabular}
\end{center}
\end{table}

We also report in Tables \ref{tab:global1} and \ref{tab:global2} the simulated global FDR of the LT procedure and the simulated global PCER for our two-stage CUSUM procedure in the settings considered above. As we can see from the table, the global FDR of the LT procedure can be much higher than the point-wise FDR $q$. This shows that control of the point-wise FDR does not guarantee control of the global FDR. In contrast, our two-stage CUSUM procedure remains control of the global PCER in all the settings.
\begin{table}[!htpb]
\begin{center}
\caption{Global FDR  of the LT procedure and global PCER of our two-stage CUSUM procedure when $m=100$.}\label{tab:global1}
\begin{tabular}{|c|c|c|c|ccc|}
  \hline
& & & LT  & \multicolumn{3}{|c|}{ Two-stage CUSUM} \\
& & & Point-wise FDR & \multicolumn{3}{|c|}{Nominal PCER}\\
IC-ARL & & $m_1$ &  $q=.04865$ & .01 & .03 & .05 \\
\hline
& & 1 & .15868 & .00553 & .01889 & .03427 \\
& & 3 & .11974 & .00440 & .01645 & .03128 \\
& & 5 & .09631 & .00364 & .01495 & .02911 \\
& Equal & 8 & .08320 & .00275 & .01340 & .02607 \\
& allocation & 10 & .06626 & .00234 & .01258 & .02585 \\
& & 20 & .04953 & .00113 & .00865 & .02014 \\
& & 50 & .03012 & .00062 & .00331 & .00882 \\
& & 80 & .02231 & .00052 & .00251 & .00589 \\
200 & & 100 & .01772 & .00046 & .00221 & .00500 \\
  \cline{2-7}
& & 1 & .24242 & .00609 & .02053 & .03754 \\
& & 3 & .15774 & .00464 & .01703 & .03190 \\
& & 5 & .11862 & .00395 & .01585 & .03040 \\
& Increasing &  8 & .08587 & .00296 & .01352 & .02755 \\
& allocation & 10 & .07876 & .00248 & .01256 & .02545 \\
& & 20 & .04791 & .00122 & .00784 & .01882 \\
& & 50 & .02875 & .00090 & .00404 & .00915 \\
& & 80 & .02166 & .00083 & .00361 & .00769 \\
& & 100 & .01608 & .00076 & .00334 & .00709 \\
  \hline
  \hline
&  & & LT  & \multicolumn{3}{|c|}{ Two-stage CUSUM} \\
& & & Point-wise FDR & \multicolumn{3}{|c|}{Nominal PCER}\\
IC-ARL & & $m_1$ &  $q=.02087 $ & .01 & .03 & .05 \\
\hline
& & 1 & .08733 & .00391 & .01534 & .02887 \\
& & 3 & .05750 & .00308 & .01404 & .02706 \\
& & 5 & .04357 & .00260 & .01273 & .02539 \\
& Equal & 8 & .03581 & .00205 & .01173 & .02351 \\
& allocation & 10 & .02983 & .00161 & .01063 & .02271 \\
& & 20 & .02168 & .00071 & .00731 & .01800 \\
& & 50 & .01356 & .00042 & .00273 & .00769 \\
& & 80 & .00963 & .00034 & .00200 & .00488 \\
500 & & 100 & .00746 & .00030 & .00174 & .00416 \\
\cline{2-7}
& & 1 & .12683 & .00430 & .01784 & .03204 \\
& & 3 & .08237 & .00343 & .01449 & .02805 \\
& & 5 & .06009 & .00303 & .01409 & .02656 \\
& Increasing & 8 & .04000 & .00219 & .01181 & .02403 \\
& allocation & 10  & .03803 & .00187 & .01135 & .02300 \\
& & 20 & .02527 & .00086 & .00727 & .01742 \\
& & 50 & .01252 & .00058 & .00326 & .00773 \\
& & 80 & .00915 & .00055 & .00287 & .00642 \\
& & 100 & .00758 & .00053 & .00270 & .00585 \\
\hline\end{tabular}
\end{center}
\end{table}

\begin{table}[!htpb]
\begin{center}
\caption{Global FDR  of the LT procedure and global PCER of our two-stage CUSUM procedure when $m=100$.}\label{tab:global2}
\begin{tabular}{|c|c|c|c|ccc|}
  \hline
& & & LT  & \multicolumn{3}{|c|}{ Two-stage CUSUM} \\
& & & Point-wise FDR & \multicolumn{3}{|c|}{Nominal PCER}\\
IC-ARL & & $m_1$ &  $q=.01034$ & .01 & .03 & .05 \\
\hline
& &1 & .04550 & .00270 & .01335 & .02480 \\
& & 3 & .02590 & .00219 & .01226 & .02383 \\
& & 5 & .02261 & .00194 & .01160 & .02290 \\
& Equal & 8 & .01920 & .00150 & .01060 & .02148 \\
& allocation & 10 & .01481 & .00114 & .00990 & .02064 \\
& & 20 & .01120 & .00050 & .00697 & .01629 \\
& & 50 & .00723 & .00028 & .00245 & .00708 \\
& & 80 & .00492 & .00023 & .00178 & .00431 \\
1000 & & 100 & .00381 & .00021 & .00153 & .00374 \\
  \cline{2-7}
& &1 & .07483 & .00359 & .01550 & .02860 \\
& & 3 & .04510 & .00275 & .01339 & .02569 \\
& & 5 & .02915 & .00213 & .01235 & .02397 \\
& Increasing & 8 & .02568 & .00169 & .01149 & .02226 \\
& allocation & 10 & .02114 & .00137 & .01034 & .02074 \\
& & 20 & .01262 & .00057 & .00668 & .01574 \\
& & 50 & .00634 & .00041 & .00294 & .00687 \\
& & 80 & .00450 & .00038 & .00246 & .00566 \\
& & 100 & .00373 & .00034 & .00236 & .00525 \\
  \hline
  \hline
&  & & LT  & \multicolumn{3}{|c|}{ Two-stage CUSUM} \\
& & & Point-wise FDR & \multicolumn{3}{|c|}{Nominal PCER}\\
IC-ARL & & $m_1$ &  $q=.00108 $ & .01 & .03 & .05 \\
\hline
& &1 & .00750 & .00101 & .00909 & .02071 \\
& & 3 & .00550 & .00064 & .00808 & .01943 \\
& & 5 & .00200 & .00055 & .00786 & .01946 \\
& Equal & 8 & .00289 & .00039 & .00748 & .01848 \\
& allocation &10 & .00218 & .00031 & .00706 & .01765 \\
& & 20 & .00114 & .00012 & .00531 & .01557 \\
& & 50 & .00057 & .00007 & .00167 & .00719 \\
& & 80 & .00059 & .00005 & .00115 & .00379 \\
10000 & & 100 & .00036 & .00005 & .00101 & .00315 \\
\cline{2-7}
& &1 & .01000 & .00136 & .01064 & .02247 \\
& &  3 & .00675 & .00084 & .00895 & .02094 \\
& &  5 & .00317 & .00067 & .00851 & .02009 \\
&Increasing &  8 & .00178 & .00048 & .00762 & .01831 \\
& allocation &  10 & .00227 & .00035 & .00749 & .01815 \\
& &  20 & .00100 & .00014 & .00520 & .01488 \\
& &  50 & .00073 & .00011 & .00203 & .00639 \\
& &  80 & .00057 & .00010 & .00170 & .00489 \\
& &  100 & .00050 & .00008 & .00157 & .00438 \\

\hline\end{tabular}
\end{center}
\end{table}

We also conduct a simulation study with a higher dimension $m=1000$. For our two-stage procedure, PCER is set at 0.005, 0.01 or 0.02. The control limits $h$ and $c_h$ for our two-stage CUSUM procedure along with the point-wise FDR $q$ for the LT procedure obtained through Monte-Carlo simulation for IC-ARL=1000 are listed in Table \ref{tab:CL1000}.
\begin{table}[!htpb]
\begin{center}
\caption{The control limits used in our proposed two-stage CUSUM and LT procedures when $m=1000$.}\label{tab:CL1000}
\begin{tabular}{|c|c|c|c|c|}
  \hline
IC-ARL & $q$ & $h$ & PCER & $c_h$ \\
  \hline
1000 & .01045  & 32.593 & .005 & .99737\\
& & &  .01 & .99379\\
& & & .02 & .98576\\
  \hline
\end{tabular}
\end{center}
\end{table}
The ATDOC comparison between the two-stage CUSUM procedure and the LT procedure is reported  in Figure \ref{fig:ATDOC}. Similar to the $m=100$ case, our two-stage CUSUM procedure has much smaller ATDOC than the LT procedure in all the cases except the case of $m_1=1$. The explanation is the same as the one given above.
Figure \ref{fig:FDR} shows the global FDR control of the LT procedure and the global PCER control of our two-stage CUSUM procedure. In both figures,  the ratio of the global FDR and the point-wise FDR is plotted for the LT procedure, and the ratio of the global PCER and its nominal value is plotted for our two-stage CUSUM procedure. As we can see, the global FDR can be much higher than the point-wise FDR in the LT procedure, while our two-stage CUSUM procedure has good control of the global PCER.

\begin{figure}[!htpb]
\begin{center}
\begin{tabular}{c}
\includegraphics[width=3.0in,height=3.3in]{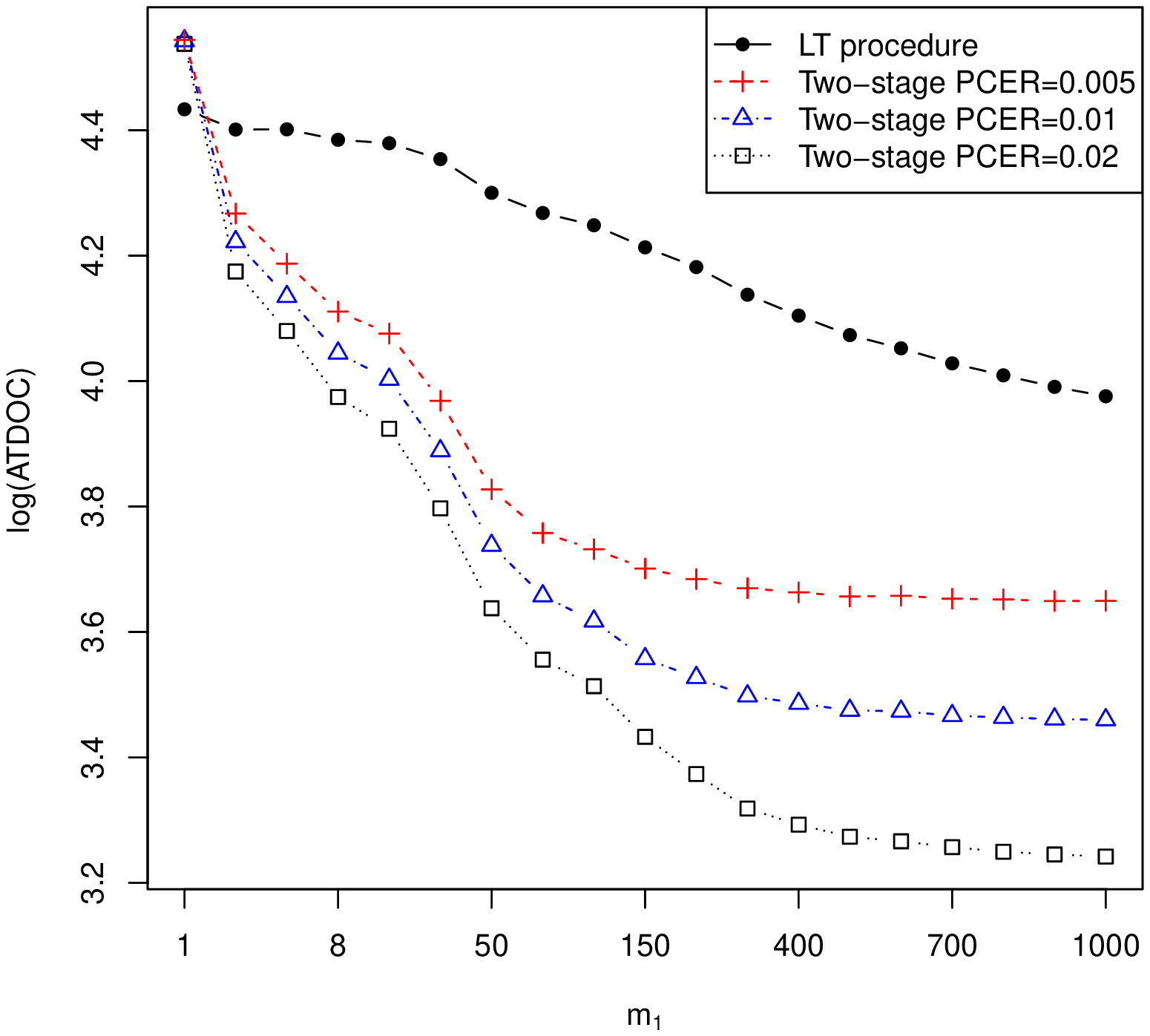}\\(a)
\end{tabular}
\begin{tabular}{c}
\includegraphics[width=3.0in,height=3.3in]{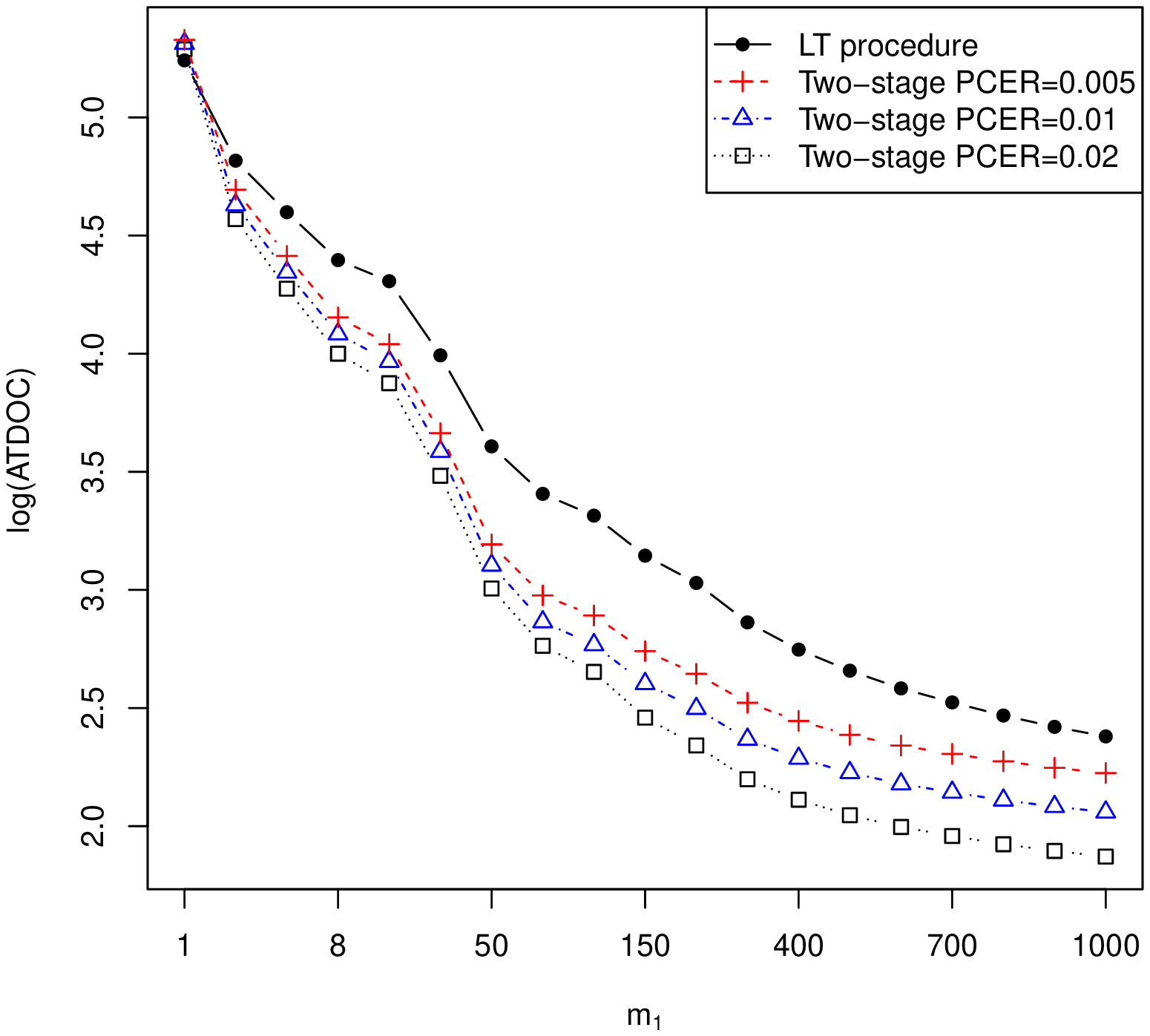}\\(b)
\end{tabular}
\caption{The ATDOC comparison between our proposed two-stage CUSUM procedure and the LT procedure when $m=1000$ and IC-ARL=1000 for: (a) equal allocation; (b) increasing allocation. Both figures are plotted on the log scale.}\label{fig:ATDOC}
\end{center}
\end{figure}

\begin{figure}[!htpb]
\begin{center}
\begin{tabular}{c}
\includegraphics[width=3.0in,height=3.3in]{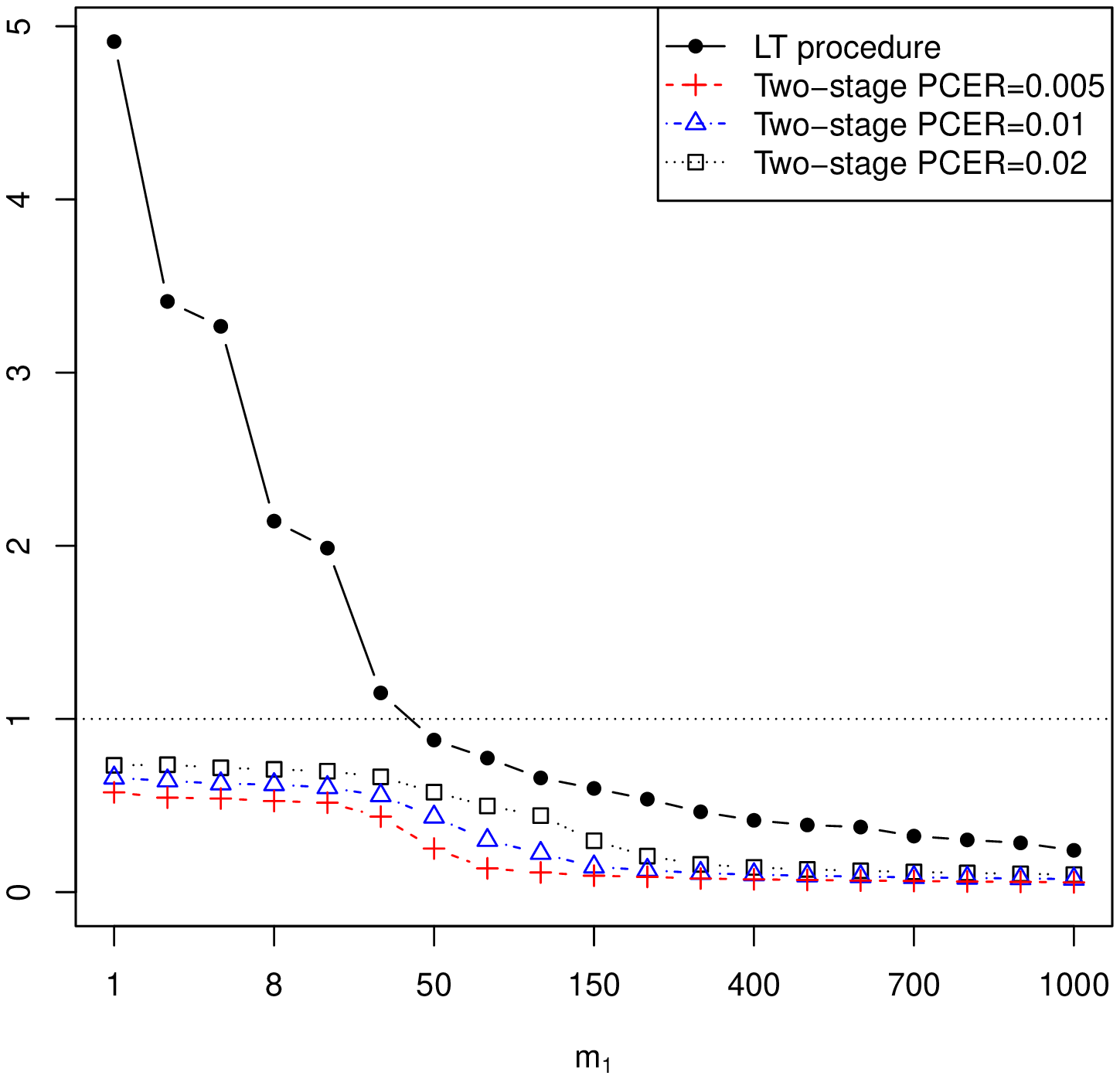}\\(a)
\end{tabular}
\begin{tabular}{c}
\includegraphics[width=3.0in,height=3.3in]{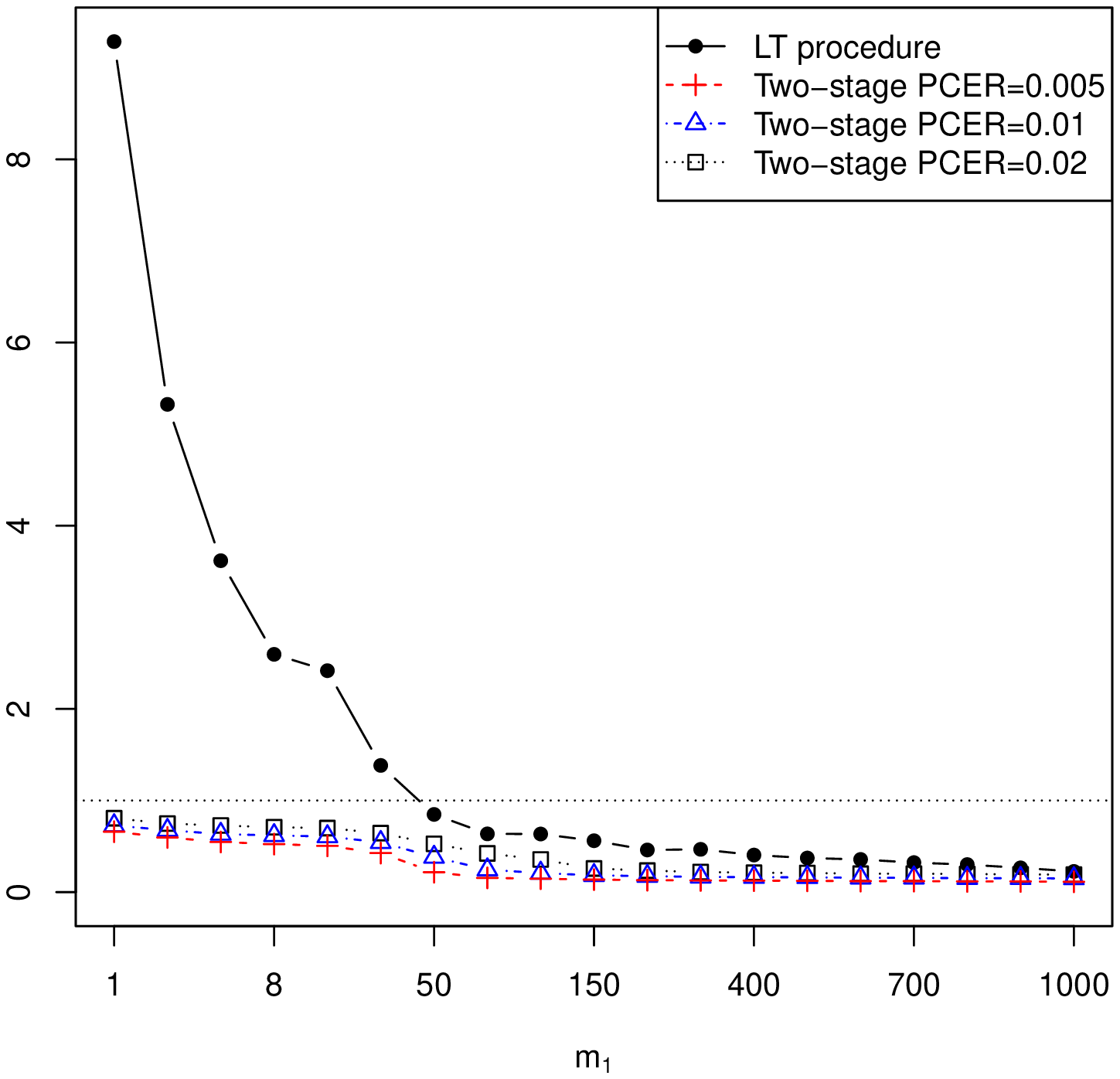}\\(b)
\end{tabular}
\caption{The ratio of the global FDR and the point-wise FDR for the LT procedure and the ratio of the global PCER and its nominal value for our two-stage CUSUM procedure when $m=1000$ and IC-ARL=1000 for: (a) equal allocation; (b) increasing allocation.}\label{fig:FDR}
\end{center}
\end{figure}

\section{Real data application}
In this section, we use a real data set to demonstrate the application of our proposed two-stage procedure. The data set we use was collected from a modern semiconductor manufacturing process and is publicly available in the UC Irvine Machine Learning Repository (http://archive.ics.uci.edu/ml/datasets/SECOM). It contains a total of $1567$ vector observations, and each vector observation consists of $591$ measured features. There is a huge amount of constant values across the $1567$ observations in $191$ features, therefore we exclude them from our analysis, which leaves us a data set with $1567$ vector observations and each being a $400$-dimensional vector.  To demonstrate the application of our proposed two-stage monitoring procedure, we treat the $1567$ observations of each feature as one data stream, and our task is then to monitor the $m=400$ data streams simultaneously and identify any non-conforming observations as soon as possible.

Among the $1567$ vector observations, $104$ of them are labeled as nonconforming and the remaining $1463$ conforming. For illustration, we use all the conforming observations (the $1463$ vector observations) as the historical sample and apply our proposed two-stage monitoring procedure to the nonconforming ones (the $104$ vector observations) to see how many nonconforming observations our procedure is able to detect. For simplicity of notation, we reorganize the data set such that the conforming observations from the $i$-th feature/data stream are denoted by $X_{i,t}$, $t=1,...,1463$ and the nonconforming ones $X_{i,t}$, $t=1464,...,1567$.

Since not all the data are normally distributed, to use the global monitoring statistic $G_t$ in (\ref{eqn:Gt}), we apply the inverse transformation, $\Phi^{-1}(\hat{F}_{i}(X_{i,t})$, $i=1,...,400$, $t=1,...,1567$ to each data stream, where $\hat{F}_{i}$ is the empirical distribution function based on the $1463$ historical observations of the $i$-th data stream. Under this transformation, the in-control distributions of all the data streams are approximately $N(0,1)$. Denote the transformed nonconforming observations by  $X^*_{i,t}$, $i=1,...,400$, $t=1464,...,1567$. For the $i$-th data stream and at each $t$, $t=1464,...,1567$, we then calculate our CUSUM statistic $C^{+*}_{i,t}$ in (\ref{eqn:CUSUM1}) based on those nonconforming observations $X^*_{i,t}$ with $\mu_{i,0}=0$ and $k_i=0.25$.  By applying the close-form $p$-value formula from Grigg and Spiegelhalter (2008) to those CUSUM statistics $C^{+*}_{i,t}$, we obtain their marginal $p$-values, $p_{i,t}$. According to Zou et al. (2014), although there exists certain weak positive dependence structure in this data set, their proposed global monitoring statistic $G_t$ in (\ref{eqn:Gt}) can still be effective. Therefore, in our proposed two-stage procedure,  we still use this $G_t$ as our global monitoring statistic and calculate its value based on $\{p_{i,t}\}_{i=1}^{400}$ to decide whether an alarm should be given at $t$. If $G_t$ is greater than the control limit $h$, we then compare $1-p_{i,t}$ to the appropriate control limit $c_h$ to decide whether the data stream is nonconforming. In the above two-stage procedure, we set the nominal IC-ARL to be 1000, and the corresponding control limits $h$ and $c_h$ used in the two stages are listed in Table \ref{tab:CL400}. As a comparison, we also apply the LT procedure to this data set and its control limit $q$ for the same nominal IC-ARL is also reported in Table \ref{tab:CL400}.
\begin{table}[!htpb]
\begin{center}
\caption{The control limits used in our proposed two-stage CUSUM and LT procedures when $m=400$.}\label{tab:CL400}
\begin{tabular}{|c|c|c|c|c|}
  \hline
IC-ARL & $q$ & $h$ & PCER & $c_h$ \\
  \hline
1000 & .01055   & 30.536 & .005 & .99832\\
& & &  .01 & .99548\\
& & & .02 & .98859\\
  \hline
\end{tabular}
\end{center}
\end{table}

By using our proposed two-stage procedure, we detect 102 out of 104 nonconforming vectors. In those 102 detected nonconforming vectors, we discover 3124, 4221, 5683 nonconforming features using PCER=0.005, 0.01, 0.02, respectively; and their respective average times to detect those nonconforming features are 61.886, 61.131, 60.274. By comparison, the one-stage LT procedure detects 89 nonconforming vectors. In those 89 detected nonconforming vectors, 2466 nonconforming features are detected and the average time to detect those nonconforming features is 64.116. From the above comparison, we can see that our two-stage procedure can provide both faster and more complete detection of nonconforming observations.

\section{Concluding Remarks}
In this paper, we introduce a general two-stage monitoring procedure for high-dimensional data streams. The proposed monitoring procedure can control both the IC-ARL and Type-I errors at the levels specified by users. Therefore, by using the proposed two-stage monitoring procedure, the users can choose how often they want to expect any false alarm  when the system is IC and how many false alarms they can tolerate when identifying OC data streams. This gives the users more freedom to design the desired monitoring procedure for high-dimensional data streams. Our simulation study shows that the proposed method has better performance comparing with the existing methods.

When monitoring high-dimensional data streams, depending on the purpose, two types of monitoring schemes are needed. The first type is for applications where any OC data stream indicates the same abnormality in the whole system and requires the same corrective action. As a result, those applications do not require the identification of OC data streams. For this monitoring purpose, a monitoring scheme which tracks a single global monitoring statistic such as those proposed in Tartakovsky et al. (2006), Mei (2010), Xie and Siegmund (2013), Zou et al. (2014) can achieve the goal.
The second type of monitoring schemes is for applications where different OC data streams indicate different problems in the system, and require different corrective actions. As a result, this type of monitoring schemes needs to identify which data streams are OC.
The monitoring schemes proposed in in Grigg and Spiegelhalter (2008), Li and Tsung (2009, 2012), and Gandy and Lau (2013) belong to this category. In the past, the research on these two types of monitoring schemes has not overlapped. In this paper, the two-stage method we propose for the second type of monitoring schemes makes use of the development from the research on the first type of monitoring schemes. Therefore, our proposed two-stage method provides an important linkage between research on these two types of monitoring schemes.

In our proposed two-stage monitoring scheme, we choose the PCER as the Type-I error rate when identifying the OC data streams in our second stage. As mentioned in the paper, the main reason for this choice is that the global PCER can be easily controlled by controlling the point-wise PCER. In some applications, it might be acceptable to control only the point-wise Type-I error rate when identifying the OC data streams. If this is the case, the FDR might be a better choice than the PCER. However, to apply the existing FDR controlling procedures to our second stage essentially requires the calculation of the following $p$-values, denoted by $p^*_{i,t}$, which are not the marginal $p$-values, but are conditional on that $G_t>h$,
\[
p^*_{i,t}=P_{H_{0,i,t}}\Big(W_{i,t} > w_{i,t}|G_t>h \Big),
\]
where $w_{i,t}$ is the observed value of the test statistic $W_{i,t}$, the subscript ``$H_{0,i,t}$'' indicates that the probability is calculated under the null hypothesis $H_{0,i,t}$; that is, the $i$-th data stream is IC. The above conditional probability is very challenging to calculate. Therefore, the existing FDR controlling procedures cannot be directly applied to our second stage. In our future research, we will explore the estimation approach proposed by Storey (2002) to develop a suitable FDR controlling procedure for the second stage of our two-stage monitoring scheme.

\section*{Acknowledgements}
We thank the Editor, Professor Fugee Tsung, the Guest Editor, Professor Changliang Zou, and the 2 referees for their constructive comments and suggestions, which improved the quality of the paper greatly.

\section*{Appendix: Proof}
\begin{proposition}If $\ell_{1,t}(\cdot),..., \ell_{m,t}(\cdot)$ are all non-increasing functions, then we have, for $i=1,...,m_{0,t}$,
\[
P_{H_{1,t}}\Big(W_{i,t} > c_h|G_t>h \Big) \leq P_{H_{0,t}}\Big(W_{i,t} > c_h|G_t>h \Big).
\]
\end{proposition}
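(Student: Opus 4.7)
The plan is to use a likelihood-ratio change of measure, reduce the claim to a conditional covariance, and then apply Chebyshev's correlation inequality after collapsing to a one-dimensional covariance by the tower property. Write $L=\prod_{j>m_{0,t}}\ell_{j,t}(p_{j,t})$ for the likelihood ratio between $H_{1,t}$ and $H_{0,t}$ (the Uniform$[0,1]$ IC marginals cancel for $j\le m_{0,t}$), and set $A=\{G_t>h\}$ and $B=\{W_{i,t}>c_h\}=\{p_{i,t}<p^*\}$, with $p^*$ determined by $c_h$ via the monotone relation between $W_{i,t}$ and $1-p_{i,t}$. Changing measure,
\[
P_{H_{1,t}}(W_{i,t}>c_h\mid G_t>h)=\frac{E_0[\mathbf{1}_B\mathbf{1}_A L]}{E_0[\mathbf{1}_A L]},
\]
where $E_0$ denotes expectation under $H_{0,t}$, so the inequality to be proved is equivalent, after clearing positive denominators, to $\operatorname{Cov}_{H_{0,t}\mid A}(L,\mathbf{1}_B)\le 0$.

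Next, because $p_{i,t}$ is independent of $\mathbf{p}_{-i,t}$ under $H_{0,t}$ and $\mathbf{1}_B$ depends only on $p_{i,t}$, the tower property collapses this to a one-dimensional covariance:
\[
\operatorname{Cov}_{H_{0,t}\mid A}(L,\mathbf{1}_B)=\operatorname{Cov}_{H_{0,t}\mid A}\!\bigl(\psi(p_{i,t}),\,\mathbf{1}(p_{i,t}<p^*)\bigr),\qquad \psi(u):=E_0\bigl[L\,\big|\,G_t(u,\mathbf{p}_{-i,t})>h\bigr].
\]
Since $\mathbf{1}(p_{i,t}<p^*)$ is non-increasing in $p_{i,t}$, Chebyshev's correlation inequality for a pair of monotone functions of a single random variable delivers the required sign on this covariance as soon as $\psi(u)$ is shown to be non-decreasing in $u$.

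The step that actually uses the non-increasing assumption on the $\ell_{j,t}$ is the monotonicity of $\psi$, and this is the main obstacle. The intuition is that, since $G_t$ is non-increasing in its first argument, the event $A_u:=\{G_t(u,\mathbf{p}_{-i,t})>h\}$ shrinks monotonically in $u$ and its surviving configurations are forced to have componentwise smaller $\mathbf{p}_{-i,t}$; and $L=\prod_{j>m_{0,t}}\ell_{j,t}(p_{j,t})$, as a product of non-increasing factors, is itself non-increasing in $\mathbf{p}_{-i,t}$, so the conditional average of $L$ on the tighter event $A_{u'}$ (for $u'>u$) should be larger. I would make this rigorous by writing $A_u=A_{u'}\cup(A_u\setminus A_{u'})$, which expresses $E_0[L\mid A_u]$ as a convex combination of $E_0[L\mid A_{u'}]$ and $E_0[L\mid A_u\setminus A_{u'}]$ with weights $P(A_{u'})/P(A_u)$ and $P(A_u\setminus A_{u'})/P(A_u)$, and then arguing that the former dominates the latter: the level surface $\{G_t(u',\mathbf{p}_{-i,t})=h\}$ separates $A_{u'}$ from $A_u\setminus A_{u'}$, placing $A_{u'}$ on the small-$\mathbf{p}_{-i,t}$ side where the non-increasing function $L$ is larger. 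A layer-cake argument, writing $L=\int_0^\infty\mathbf{1}\{L>t\}\,dt$, reduces this dominance to a comparison of measures of nested decreasing sets, which is transparent in one variable and extends coordinate-wise via the product structure of $L$.

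Chaining the steps together: $\psi$ non-decreasing plus $\mathbf{1}(p_{i,t}<p^*)$ non-increasing yields $\operatorname{Cov}_{H_{0,t}\mid A}(\psi(p_{i,t}),\mathbf{1}_B)\le 0$ by Chebyshev, which unwinds through the change-of-measure identity to give the asserted inequality.
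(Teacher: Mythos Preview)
Your argument is correct and runs parallel to the paper's, though the packaging is different. The paper rewrites each conditional probability as $\{1+P(W_1<c_h,\,G>h)/P(W_1>c_h,\,G>h)\}^{-1}$, reduces the claim to the ratio inequality
\[
\frac{P_{H_1}(W_1<c_h,\,G>h)}{P_{H_0}(W_1<c_h,\,G>h)}\;\ge\;\frac{P_{H_1}(W_1>c_h,\,G>h)}{P_{H_0}(W_1>c_h,\,G>h)},
\]
passes to $p$-value coordinates, and defines
\[
b(p_1)=\frac{\int_0^{v_2}\!\cdots\!\int_0^{v_m}\ell_{m_0+1}(p_{m_0+1})\cdots\ell_m(p_m)\,dp_m\cdots dp_2}{\int_0^{v_2}\!\cdots\!\int_0^{v_m}dp_m\cdots dp_2},
\]
then bounds both ratios above and below by $b(v_1)$ using that $b$ is non-decreasing. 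Your route---change of measure, reduction to $\operatorname{Cov}_{H_0\mid A}(L,\mathbf 1_B)\le 0$, tower property down to $p_{i,t}$, Chebyshev---is more conceptual and arguably cleaner, but it pivots on exactly the same lemma: your $\psi(u)=E_0[L\mid A_u]$ \emph{is} the paper's $b(p_1)$, and both proofs stand or fall on its monotonicity.

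On that point, your justification (``layer-cake, then extend coordinate-wise via the product structure of $L$'') is the one place I would tighten. The integration region $A_u$ is a lower set but not a product rectangle, so the product form of $L$ does not by itself let you peel off coordinates. What actually makes $E_0[L\mid A_{u'}]\ge E_0[L\mid A_u]$ work for nested lower sets $A_{u'}\subset A_u$ is an FKG-type positive association: the indicator $\mathbf 1_{A_u}$ is log-submodular, so the conditional product measure on $A_u$ positively correlates the two non-increasing functions $L$ and $\mathbf 1_{A_{u'}}$. The paper is equally informal here (``it is not difficult to see that $b(p_1)$ is non-decreasing''), so you are at parity with the source, but invoking FKG would close the gap in one line and is more robust than the layer-cake sketch.
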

\begin{proof}
Since the proof will be the same for each time point $t$, for the sake of notational simplicity, we drop the subscript ``t'' in the proof below. Therefore, we need to prove that, for $i \leq m_{0}$,
\[
P_{H_{1}}\Big(W_{i} > c_h|G>h \Big) \leq P_{H_{0}}\Big(W_{i} > c_h|G>h \Big).
\]

Below we prove that
\begin{equation}
\label{eqn:conditP}
P_{H_{1}}\Big(W_{1} > c_h|G>h \Big) \leq P_{H_{0}}\Big(W_{1} > c_h|G>h \Big).
\end{equation}
For $i=2,..., m_{0}$, it can be proved similarly.

Based on the definition of conditional probabilities, it is easy to see that
\begin{align*}
& P_{H_{1}}\Big(W_{1} > c_h|G>h \Big)=\frac{P_{H_{1}}\Big(W_{1} > c_h \text{ and } G>h \Big)}{P_{H_{1}}\Big(G>h \Big)}\\
=&\frac{P_{H_{1}}\Big(W_{1} > c_h \text{ and } G>h \Big)}{P_{H_{1}}\Big(W_{1} > c_h \text{ and } G>h \Big)+P_{H_{1}}\Big(W_{1} < c_h \text{ and } G>h \Big)}\\
=&\left\{1+\frac{P_{H_{1}}\Big(W_{1} < c_h \text{ and } G>h \Big)}{P_{H_{1}}\Big(W_{1} > c_h \text{ and } G>h \Big)}\right\}^{-1}.
\end{align*}
Similarly we can write
\[
P_{H_{0}}\Big(W_{1} > c_h|G>h \Big) =\left\{1+\frac{P_{H_{0}}\Big(W_{1} < c_h \text{ and } G>h \Big)}{P_{H_{0}}\Big(W_{1} > c_h \text{ and } G>h \Big)}\right\}^{-1}.
\]
Therefore, to prove (\ref{eqn:conditP}), it suffices to show that,
\[
\frac{P_{H_{1}}\Big(W_{1} < c_h \text{ and } G>h \Big)}{P_{H_{1}}\Big(W_{1} > c_h \text{ and } G>h \Big)} \geq \frac{P_{H_{0}}\Big(W_{1} < c_h \text{ and } G>h \Big)}{P_{H_{0}}\Big(W_{1} > c_h \text{ and } G>h \Big)},
\]
or
\[
\frac{P_{H_{1}}\Big(W_{1} < c_h \text{ and } G>h \Big)}{P_{H_{0}}\Big(W_{1} < c_h \text{ and } G>h \Big)} \geq \frac{P_{H_{1}}\Big(W_{1} > c_h \text{ and } G>h \Big)}{P_{H_{0}}\Big(W_{1} > c_h \text{ and } G>h \Big)}.
\]

Denote the probability density function of $W_{i}$ under the null and alternative hypotheses by $f_{i}(\cdot)$ and $g_{i}(\cdot)$, respectively. Therefore,
\begin{align}
\label{eqn:less}
&\frac{P_{H_{1}}\Big(W_{1} < c_h \text{ and } G>h \Big)}{P_{H_{0}}\Big(W_{1} < c_h \text{ and } G>h \Big)}\nonumber\\
=&\frac{\int^{c_h}_{-\infty} \int_{u_2}^{\infty} \cdots\int_{u_m}^{\infty} f_{1}(w_1) \cdots f_{m_{0}}(w_{m_{0}}) g_{m_{0}+1}(w_{m_{0}+1}) \cdots g_{m}(w_m) dw_m \cdots dw_2 dw_1}{\int^{c_h}_{-\infty} \int_{u_2}^{\infty} \cdots\int_{u_m}^{\infty} f_{1}(w_1) \cdots f_{m_{0}}(w_{m_{0}}) f_{m_{0}+1}(w_{m_{0}+1}) \cdots f_{m}(w_m) dw_m \cdots dw_2 dw_1},
\end{align}
where $u_2$, ..., $u_m$ are the lower limits of integration with respect to $w_2$, ..., $w_m$, respectively, and $u_m$ is the function of $w_{1}$, ..., $w_{m-1}$, $u_{m-1}$ is the function of $w_1$, ..., $w_{m-2}$, and so on. Since larger $W_i$ leads to larger $G$, $u_2$, ..., $u_m$ are all non-increasing functions of $w_1$.

Similarly,
\begin{align}
\label{eqn:greater}
&\frac{P_{H_{1}}\Big(W_{1} > c_h \text{ and } G>h \Big)}{P_{H_{0}}\Big(W_{1} > c_h \text{ and } G>h \Big)}\nonumber\\
=&\frac{\int_{c_h}^{\infty} \int_{u_2}^{\infty} \cdots\int_{u_m}^{\infty} f_{1}(w_1) \cdots f_{m_{0}}(w_{m_{0}}) g_{m_{0}+1}(w_{m_{0}+1}) \cdots g_{m}(w_m) dw_m \cdots dw_2 dw_1}{\int_{c_h}^{\infty} \int_{u_2}^{\infty} \cdots\int_{u_m}^{\infty} f_{1}(w_1) \cdots f_{m_{0}}(w_{m_{0}}) f_{m_{0}+1}(w_{m_{0}+1}) \cdots f_{m}(w_m) dw_m \cdots dw_2 dw_1}.
\end{align}

Denote the $p$-value of $W_{i}$ by $p_{i}$ and its probability density function when the data stream is OC by $\ell_{i}(\cdot)$. Then (\ref{eqn:less}) and (\ref{eqn:greater}) can be calculated using their $p$-values. That is,
\[
\frac{P_{H_{1}}\Big(W_{1} < c_h \text{ and } G>h \Big)}{P_{H_{0}}\Big(W_{1} < c_h \text{ and } G>h \Big)}=\frac{\int_{v_1}^1 \int^{v_2}_0 \cdots\int^{v_m}_0 \ell_{m_{0}+1}(p_{m_{0}+1}) \cdots \ell_{m}(p_m) dp_m \cdots dp_2 dp_1}{\int_{v_1}^1 \int^{v_2}_0 \cdots\int^{v_m}_0 dp_m \cdots dp_2 dp_1},
\]
and
\[
\frac{P_{H_{1}}\Big(W_{1} > c_h \text{ and } G>h \Big)}{P_{H_{0}}\Big(W_{1} > c_h \text{ and } G>h \Big)}=\frac{\int_0^{v_1} \int^{v_2}_0 \cdots\int^{v_m}_0 \ell_{m_{0}+1}(p_{m_{0}+1}) \cdots \ell_{m}(p_m) dp_m \cdots dp_2 dp_1}{\int^{v_1}_0 \int^{v_2}_0 \cdots\int^{v_m}_0 dp_m \cdots dp_2 dp_1}.
\]
Here $v_1$ is the $p$-value $p_1$ when $W_1=c_h$; $v_2$, ..., $v_m$ are the lower limits of integration with respect to $p_2$, ..., $p_m$, respectively. Again $v_m$ is the function of $p_{1}$, ..., $p_{m-1}$,  $v_{m-1}$ is the function of $p_1$, ..., $p_{m-2}$, and so on, and $v_2$, ..., $v_m$ are all non-increasing functions of $p_1$.

Define
\[
a(p_1) =\int^{v_2}_0 \cdots\int^{v_m}_0 dp_m \cdots dp_2,
\]
and
\[
b(p_1)=\frac{\int^{v_2}_0 \cdots\int^{v_m}_0 \ell_{m_{0}+1}(p_{m_{0}+1}) \cdots \ell_{m}(p_m) dp_m \cdots dp_2}{\int^{v_2}_0 \cdots\int^{v_m}_0 dp_m \cdots dp_2}.
\]
Since $\ell_{m_{0}+1}(\cdot),..., \ell_{m}(\cdot)$ are all non-increasing functions, and $v_2$, ..., $v_m$ are all non-increasing functions of $p_1$,   it is not difficult to see that $b(p_1)$ is a non-decreasing function of $p_1$.
Then
\begin{align*}
&\frac{P_{H_{1}}\Big(W_{1} < c_h \text{ and } G>h \Big)}{P_{H_{0}}\Big(W_{1} < c_h \text{ and } G>h \Big)}=\frac{\int_{v_1}^1 a(p_1)\cdot b(p_1) dp_1}{\int_{v_1}^1 a(p_1)dp_1} \geq \frac{\int_{v_1}^1 a(p_1)\cdot b(v_1) dp_1}{\int_{v_1}^1 a(p_1)dp_1}=b(v_1),\\
&\frac{P_{H_{1}}\Big(W_{1} > c_h \text{ and } G>h \Big)}{P_{H_{0}}\Big(W_{1} > c_h \text{ and } G>h \Big)}=\frac{\int_0^{v_1} a(p_1)\cdot b(p_1) dp_1}{\int_0^{v_1} a(p_1)dp_1} \leq \frac{\int_0^{v_1} a(p_1)\cdot b(v_1) dp_1}{\int_0^{v_1} a(p_1)dp_1}=b(v_1).
\end{align*} 
Therefore, we have
\[
\frac{P_{H_{1}}\Big(W_{1} < c_h \text{ and } G>h \Big)}{P_{H_{0}}\Big(W_{1} < c_h \text{ and } G>h \Big)} \geq \frac{P_{H_{1}}\Big(W_{1} > c_h \text{ and } G>h \Big)}{P_{H_{0}}\Big(W_{1} > c_h \text{ and } G>h \Big)}.
\]
This completes the proof.
\end{proof}


\nocite*{}

\end{document}